\def\arxiv{1}
\newtheorem{theorem}{Theorem}[section]
\newtheorem{lemma}[theorem]{Lemma}
\newtheorem{definition}{Definition}
\title{FliT: A Library for Simple and Efficient Persistent Algorithms}
\newif\ifcomments
\newcommand{\Guy}[1]{\noindent 
	{\color{red}{\textbf{Guy: }#1}}}
\newcommand{\Naama}[1]{\noindent
	{\color{blue}{\textbf{Naama: }#1}}}
\newcommand{\Hao}[1]{\noindent
	{\color{purple}{\textbf{Hao: }#1}}}
\newcommand{\hedit}[1]{{\color{purple}{#1}}}
\newcommand{\Michal}[1]{\noindent
	{\color{orange}{\textbf{Michal: }#1}}}
\newcommand{\hedit}[1]{#1}
\newcommand{\Guy}[1]{}
\newcommand{\Naama}[1]{}
\newcommand{\Hao}[1]{}
\newcommand{\Michal}[1]{}
\newcommand{\FM}{FliT}
\newcommand{\fminstruction}{flit-instruction}
\newcommand{\flushed}{persisted}
\newcommand{\flushedi}{p-instruction}
\newcommand{\flushedm}{p-store}
\newcommand{\flushedl}{p-load}
\newcommand{\nonflushedi}{v-instruction}
\newcommand{\nonflushedm}{v-store}
\newcommand{\nonflushedl}{v-load}
\newcommand{\nonflushed}{volatile}
\newcommand{\flush}{pwb}
\newcommand{\fence}{pfence}
\newcommand{\pmem}{persistent memory}
\newcommand{\dram}{volatile memory}
\newcommand{\fmcounter}{flit-counter}
\newcommand{\fmadjacent}{flit-adjacent}
\newcommand{\fmhash}{flit-HT}
\newcommand{\operation}{operation}
\newcommand{\interface}{P-V  Interface}
\newcommand{\noflit}{plain}
\newif\ifpaper
\begin{document}






\author[1]{Yuanhao Wei}
\author[2]{Naama Ben-David}
\author[3]{Michal Friedman}
\author[1]{Guy E. Blelloch}
\author[3]{Erez Petrank}

\affil[1]{Carnegie Mellon University, USA}
\affil[2]{VMware Research, USA}
\affil[3]{Technion, Israel}

\affil[ ]{\textit{yuanhao1@cs.cmu.edu, bendavidn@vmware.com, michal.f@cs.technion.ac.il, guyb@cs.cmu.edu, erez@cs.technion.ac.il}}

  \date{}


	\maketitle 
		\begin{abstract}
Non-volatile random access memory (NVRAM) offers byte-addressable persistence at speeds comparable to DRAM. However, with caches remaining volatile, automatic cache evictions can reorder updates to memory, potentially leaving persistent memory in an inconsistent state upon a system crash. Flush and fence instructions can be used to force ordering among updates, but are expensive. This has motivated significant work studying how to write correct and efficient persistent programs for NVRAM. 

In this paper, we present FliT, a C++ library that facilitates writing efficient persistent code. Using the library's default mode makes any linearizable data structure durable with minimal changes to the code. FliT avoids many redundant flush instructions by using a novel algorithm to track dirty cache lines. 
The FliT library also allows for extra optimizations, but achieves good performance even in its default setting. 

To describe the FliT library's capabilities and guarantees, we define a persistent programming interface, called the P-V Interface, which FliT implements. The P-V Interface captures the expected behavior of code in which some instructions' effects are persisted and some are not. We show that the interface captures the desired semantics of many practical algorithms in the literature. 

We apply the FliT library to four different persistent data structures, and show that across several workloads, persistence implementations, and data structure sizes, the FliT library always improves operation throughput, by at least $2.1\times$ over a naive implementation in all but one workload. 
	\end{abstract}

	\section{Introduction}

The long-anticipated fast, byte-addressable non-volatile random access memories (NVRAM) are now a reality, with Intel Optane available alongside DRAM in the newest machines. NVRAM promises to revolutionize persistent algorithms, with speeds up to three orders of magnitude faster than SSD. However, designing correct persistent algorithms for NVRAM is notoriously difficult. Subtle bugs are easy to overlook. The main difficulty of programming for NVRAM stems from the fact that, for the time being, caches and registers remain volatile.\footnote{While Intel announced the new eADR technology~\cite{eADR} that promises to persist cache contents as well, this would require powerful and expensive batteries to implement. Thus, it is unlikely that volatile caches will cease being a reality in the near future~\cite{Scargall2020}.} This means that if programs are simply run as they would be on DRAM, significant parts of the state of memory could be lost upon a system crash, thus not achieving meaningful persistence. On the other hand, programs designed for SSD or disk cannot efficiently work as-is on NVRAM, due to the finer atomic granularity of this new memory technology. New techniques must therefore be developed to achieve correct and efficient persistence on NVRAM.

To prevent values on cache being lost upon a crash, programmers must use explicit flush and fence instructions to push cache lines to NVRAM in a certain order. Care must be taken in deciding which values to flush and when to execute the flush and fence instructions, since these instructions are expensive.
Researchers have therefore dedicated significant effort to carefully reasoning about inherent dependencies in algorithms, to omit flushes when it is safe to do so, yet still guarantee persistence on NVRAM~\cite{friedman2018persistent,friedman2020nvtraverse,ramalhete19onefile,correia2018romulus,chauhan2016nvmove,chen2015persistent,coburn2011nvheaps,cohen2018inherent,lee2017wort,lee2019recipe,xu2016nova}.

Data races in persistent programs pose even more challenges. Since writing and persisting values cannot be done atomically, a value can be visible to other threads before being persisted. Thus, to avoid memory inconsistencies, a process may have to flush locations it reads, even if processes flush locations when they write as well. 
However, in most cases, a writing process can finish persisting its new value before any other process reads it. In that case, it seems wasteful to have the reader flush this value as well.
Existing work in the literature avoids these wasteful flushes by using a bit in each memory word to indicate whether or not it has already been flushed~\cite{david2018logfree,wang2018markbit,guerraoui2020efficient}.
This optimization has been shown to have tremendous benefits in practice, but borrowing a bit from each word is not always possible. 
Furthermore, this optimization requires modifying memory using compare-and-swap, and therefore cannot be applied to data structures designed with other primitives, such as fetch-and-add or swap.

We propose a new technique for avoiding unnecessary flushes which is fully general in the sense that it can be applied to any code safely.
The idea is to use counters (separate from the memory word) to keep track of ongoing stores for each variable. 
When a store begins, it \emph{tags} the memory location it operates on by incrementing the corresponding counter. Loads check the counter when accessing a given memory location, and only execute a flush instruction on it if it is tagged. In this way, flush instructions are only executed when needed. 
\hedit{This technique allows for flexibility in the placement of these counters.	
The counters can be, for example, placed next to each variable or in a separate hash table.
We experiment with different options in Section~\ref{sec:evaluation}.}	

We package this technique into an easy-to-use C++ library called \emph{\FM}, or \emph{Flush if Tagged}, which helps programmers easily design efficient persistent code for NVRAM, abstracting away details of flush and fence instructions, and applying the optimization under the hood. 
At a high level, the \FM{} library persists the effect of each instruction without requiring the programmer to handle low-level flushing and barriers.

The \FM{} library greatly improves the performance of persistent code, since it enables the program to safely skip flush instructions when they aren't needed. Furthermore, \FM{} is easy to use, and its syntax requires minimal changes when applying it to existing code. Indeed, to use \FM{}, the programmer simply needs to modify the declaration of variables to be persisted, and annotate when an operation terminates -- this already makes any linearizable data structure durably linearizable~\cite{izraelevitz2016linearizability}. For example, a C++11 implementation of Harris's linked list~\cite{harris2001pragmatic} can be made durably linearizable using our library by changing just seven lines of code.

Another advantage of the \FM{} library is its flexibility; while, by default, the \FM{} library instruments each load and store instruction to access the tag counters, this does not have to be the case. Many previous works have focused on understanding which values must be persisted, and which can be left volatile~\cite{friedman2018persistent,friedman2020nvtraverse,david2018logfree,chauhan2016nvmove,coburn2011nvheaps,cohen2018inherent}. These efforts have led to many optimized persistent data structure implementations. The \FM{} library can complement these existing works by allowing the programmer to specify whether a specific instruction's arguments should be left volatile. In that case, the instruction can be annotated as such, and the flushing mechanism is bypassed. Thus, while the \FM{} library can be used to persist all memory values in a naive manner to yield a fairly performant solution, it can be combined with existing optimizations to yield even better results.

To more formally argue about the library's correctness, we define an \emph{abstract interface}, called the \emph{\interface}, which the \FM{} library implements. Intuitively, the interface considers two types of instructions; those whose effects must be persisted (called \emph{\flushedi{s}}), and those whose persistence has been optimized away (called \emph{\nonflushedi{s}}). The \interface{} describes the interaction between these two types of instructions and the resulting effect on the memory. We show that the \interface{} captures persistence behavior in many algorithms in the literature. Intuitively, the \interface{} abstracts flush and fence instructions down to their underlying meaning, and we use it to show that the \FM{} library behaves as expected.
We believe that the \interface{} offers a good balance between ease of programming and the efficiency of potential implementations. Since it is relatively low-level, it can be implemented efficiently, as is exemplified by \FM{}.
Furthermore, designing durably linearizable data structures~\cite{izraelevitz2016linearizability} is easy using the \interface; if every instruction is made a \flushedi, a linearizable data structure becomes durable. On the other hand, carefully reasoned optimizations can also be applied by making some instructions \nonflushedi{s} where possible. Thus, we believe that the \interface{} may be of independent interest.


We evaluate the \FM{} library by using it to implement four different durable data structures; a linked-list~\cite{harris2001pragmatic}, a BST~\cite{aravind14bst}, a skiplist~\cite{david2018logfree}, and a hash table~\cite{david2018logfree}. Furthermore, for each data structure, we evaluate three different ways of making it durable; one that makes all instructions \flushedi{s}, and two more optimized settings that appear in the literature; we consider the  NVtraverse methodology~\cite{friedman2020nvtraverse}, which allows us to have \nonflushedl{s} while traversing the data structure, and a manually optimized durable version of the same data structure~\cite{david2018logfree}. We also evaluate different settings for the placement of the counters in the implementation of \FM{}, and compare these to the existing bit-tagging technique~\cite{david2018logfree,guerraoui2020efficient,wang2018markbit}. \Guy{Might be worth mention this includes both placing adjacet to the value and using a hash table.    The idea of a hash table does not show up until later but I think it is an interesting feature of the approach.} \Hao{added a sentence about this earlier.}
We observe that, the \FM{} library provides up to $200\times$ speedup over a durable linearizable version implemented with plain flush instructions. Furthermore, even for highly optimized implementations, the \FM{} library still provides up to $4.32\times$ speedup, and never decreases the throughput of any implementation.

In summary, the contributions of the paper are as follows.

\begin{itemize}
	\item We present a new technique for tracking dirty cache lines that is fully general.
	\item We present the \FM{} library, which uses this technique to instrument instructions giving an easy way to design efficient persistent code.
	\item We formalize the \FM{} library interface as the \interface, which captures many practical use cases, and gives a simple way of creating durably linearizable data structures.
	\item We evaluate the \FM{} library and show it can significantly improve the performance of even the most optimized persistent algorithms.
\end{itemize}

The rest of this paper is organized as follows. Background is discussed in Section~\ref{sec:prelims}. In Section~\ref{sec:semantics}, we present the \interface{} definition. We rely on this interface when presenting the \FM{} library syntax in Section~\ref{sec:lib}, and its implementation and algorithmic ideas in Section~\ref{sec:algorithm}. We evaluate the \FM{} library in Section~\ref{sec:evaluation}. Finally, we discuss additional related work and conclude the paper in Sections~\ref{sec:related} and~\ref{sec:conclusion}.

	\section{Background and Preliminaries}\label{sec:prelims}


\paragraph{Flushing on current hardware} 
In existing architectures, there are specific \emph{flush} instructions which write back a value from a specific cache line to the main memory. The flush instructions might differ by their strength and whether they invalidate the cache line, which influences performance. In addition, there are \emph{fence} instructions which provide ordering. A store fence ensures that all preceding writes and flushes executed by a specific process are visible to other processes before any writes or flushes executed after the fence.
A flush followed by a fence
blocks until all previously flushed locations have reached main memory, which may be volatile (i.e., DRAM), or non-volatile (i.e., NVRAM), depending on the mapping of the specific flushed address. 

In the rest of this paper, we will use the term \emph{\flush} (persistent-write-back) to refer to the weakest form of flushing, which does not block or invalidate. It is \emph{persistent}, since in all our use-cases, the memory it flushes is mapped to NVRAM.
As mentioned above, after a non-blocking flush instruction, a fence must be called to ensure the completion of the flush. In this paper, we use \emph{\fence} to refer to a fence instruction. A \fence{} called by a process $i$ is assumed to order all previous \flush{} instructions called by $i$ before any \flush{} or write instructions that are executed after the \fence{}. 
The \flush{} and \fence{} instructions are architecture-agnostic. 


\paragraph{Previous flushing optimizations} We defer most of the discussion of related work to Section~\ref{sec:related}. However, some flushing optimizations have appeared in the literature that are reminiscent of the \FM{} library's implementation, so we briefly discuss them now. David et al.~\cite{david2018logfree} introduce a technique they call \emph{link-and-persist} to avoid executing \flush{} instructions when the variable being flushed is clean. Their technique works by using a single bit in each memory word as a flag indicating whether or not it has been flushed since the last time it was updated. When a new value is written,  it is written with the flag up. The writing process then executes a \flush{} and a \fence{} to persist the new value, and then executes another store to flip the flag down. A reader executes a \flush{} on any location it read that had the flag up, and skips flushing every time the flag is down. This technique has appeared in the literature under different names~\cite{guerraoui2020efficient,wang2018markbit, zuriel19efficient}, always optimizing redundant \flush{s}, and yielding faster algorithms. This technique is similar to the implementation of our \FM{} library. However, the \FM{} library is more general and flexible. For one, it does not require taking a bit in every memory word. While pointers leave unused bits in each word, some algorithms make use of these bits for other parts of their logic. The link-and-persist technique is not applicable to such algorithms. Furthermore, for link-and-persist to work, all stores must be executed using a CAS instruction (as opposed to, for example, \textit{fetch-and-add} or \textit{swap}), to prevent accidentally removing a flag for a value that has not yet been flushed. The \FM{} library does not suffer from these restrictions. Finally, as will be shown in the rest of the paper, the \FM{} library also provides flexibility in allocating the space used for metadata tracking persistent state, which can sometimes be a useful way to optimize implementations.


\subsection{Model}\label{sec:model}

We consider a shared memory setting in which $n$ processes access two types of memory; \dram{} and \pmem. Volatile memory roughly corresponds to caches and registers, as well as DRAM, on real architectures, whereas \pmem{} corresponds to the NVRAM. We assume that upon a \emph{system crash} anything that is in \pmem{} remains there, but anything on \dram{} is lost.

Processes can access shared memory using \emph{read}, \emph{write}, or read-modify-write (RMW) instructions, like \emph{compare-and-swap (CAS)}, \emph{fetch-and-add (FAA)}, and \emph{test-and-set (TAS)}. We sometimes refer to read instructions as \emph{loads}, and to all other instructions collectively as \emph{stores}. Each memory location is categorized at any given point in time as \emph{shared} or \emph{private}. There is a \emph{root} location in memory, that is always shared. A private memory location can only be accessed by a single process $i$,  which can make that location shared by executing a specific store on some shared location. This store depends on the algorithm; it could be releasing some lock, or swinging a shared pointer  to point to this location.

All accesses are applied to \dram. To make a value that is in \dram{} appear in \pmem, processes can execute \emph{persistence instructions}, which include \emph{\flush} and \emph{\fence} (as long as these addresses are mapped to the persistent memory). From now on in the paper, whenever we invoke \flush{} on some location, we assume that location is mapped to \pmem. A \flush{} instruction takes a memory location as a parameter. The value $v$ in memory location $\ell$ is said  to be \emph{flushed} if a \flush{} instruction was executed on $\ell$ when $v$ was in $\ell$. After a process $i$ executes a \fence{} instruction, any value that was flushed by $i$ is in \pmem.

A data structure $D$ defines a set of operations, along with a \emph{sequential specification}, defining how the operations behave in a sequential execution.
Histories are composed of \emph{operations} and \emph{crash events}. 
A crash event erases all values in \dram, but leaves the \pmem{} intact. Furthermore, after a crash event, new processes are spawned.
A history $H$ of operations of data structure $D$, with no crash events, is \emph{linearizable} if there is a single point in time during the execution of each operation at which that operation \emph{takes effect}, such that the sequence of these points adheres to the sequential specification of $D$. A history with crash events is \emph{durably linearizable}~\cite{izraelevitz2016linearizability} if it is linearizable after all crash events are removed from it.
A data structure implementation is linearizable (resp. durably linearizable) if all possible histories of it are linearizable (resp. durably linearizable).

	\section{Persistent-Volatile Instruction Interface}\label{sec:semantics}

Before presenting the \FM{} library, we define the abstract interface that it implements. This interface, called the \interface{}, is important for discussing the correctness of the \FM{} library implementation; we later prove that our implementation satisfies the abstract interface. Furthermore, this interface allows users of the \FM{} library to reason about their code in a precise manner. 
	
The \interface{} aims to capture the behavior of a program with both volatile and persistent memory. 
Firstly, handling persistence should not affect the behavior of the volatile memory. In particular, this means that we should expect to see the same sequential semantics on volatile memory as we do in a classic system. That is, any load on volatile memory should return the value written by the most recent store. 
For persistence, we expect the interface to capture the behavior of code that uses \flush{} and \fence{} instructions. We note also that dependencies between instructions can play a role in when we expect a value to be persisted; if a value has been written but never read, it may be ok for it to be lost upon a system crash, since its effects have not yet been observed. We formalize these intuitions below.

We begin defining the interface by introducing terminology to separate two types of instructions: we say an instruction is a \emph{\flushedi} if it has to be persisted (defined below), and a \emph{\nonflushedi} if it does not. More specifically, we refer to \flushed{} loads and stores as \emph{\flushedl{s}} and \emph{\flushedm{s}} respectively, and to their \nonflushed{} counterparts as \emph{\nonflushedl{s}} and \emph{\nonflushedm{s}}. If we do not specify whether an instruction is \flushed{} or \nonflushed{}, then it could be either. 

To nail this down precisely, we further distinguish between \emph{shared instructions}, which operate on shared memory, and \emph{private instructions}, which operate on a private memory location. A private instruction may allow more flexibility in when it is persisted, since other processes cannot observe its effects. Moreover, private values are not needed after a crash, since we assume new processes are spawned.

We refer to the memory location an instruction operates on as its \emph{location}. Furthermore, we associate a \emph{value} with each instruction; a \emph{load}'s value is what it returned (read from its location), and a \emph{store}'s value is the value newly written on its location. When we say an instruction is \emph{persisted}, we mean its value is on persistent memory.

To create durable code, we must reason about \emph{dependencies} among different instructions. In particular, for a new store to be safe in a persistent setting, a process $i$ must ensure that all its dependencies have been persisted \emph{before} executing the store. That is, the values that process $i$ used to determine the value and location of the new store must not be lost at a later time. Furthermore, to maintain a store-order guarantee for persistent memory, previous store instructions by the same process $i$ must also be persisted before $i$'s new store. Finally, to prevent losing the effects of a completed operation, we must persist all of $i$'s dependencies and store values before $i$ completes an operation.

The \interface{}, defined in Definition~\ref{def:spec}, formalizes the meaning of dependencies in terms of \flushedm{s} and \flushedl{s}; a process $i$ depends on its own \flushedm{s} (Condition~\ref{cond:persistStore}), and on previous \flushedm{s} on locations on which $i$ executes a \flushedl{} (Condition~\ref{cond:persistLoad}). The interface then requires that these dependencies be persisted before $i$ executes a store that is visible to other processes (shared), or before it completes an operation (Condition~\ref{cond:fence}). To capture which \flushedm{s} become dependencies, we consider the \emph{linearization} of instructions. Intuitively, an instruction linearizes at the time it accesses volatile memory (Condition~\ref{cond:seqSpec}). Note that Conditions~\ref{cond:seqSpec},~\ref{cond:persistStore}, and~\ref{cond:persistLoad} apply to both private and shared instructions, and that \nonflushedi{s} don't add dependencies.





\begin{definition}\label{def:spec}[The \interface.]
	Each instruction has a linearization point within its interval, such that:
	\begin{enumerate}
		\item \label{cond:seqSpec} \textbf{Keeping Volatile Memory Behavior.} 
		A load $r$ on location $\ell$ returns the value of the most recent store on $\ell$ that linearized before $r$.
		
		\item \label{cond:persistStore} \textbf{Store Dependencies.} Let $s$ be a linearized \flushedm{} executed by a process $i$. $i$ depends on $s$.  
		
		\item \label{cond:persistLoad} \textbf{Load Dependencies.} 
		Let $r$ be a \flushedl{} by process $i$ on location $\ell$. $i$ depends on every \flushedm{} on $\ell$ that was linearized before $r$.
		
		\item \label{cond:fence} \textbf{Persisting Dependencies.} 
		Let $t$ be either the linearization point of a shared store by process $i$, or the time at which $i$ completes an \operation. The value of every store $i$ depended on before time $t$ is persisted by time~$t$. 
	\end{enumerate}
\end{definition}

\subsection{Applicability of the \interface}\label{sec:applications}
In this subsection, we show that for many algorithms designed for NVRAM, it is easy to replace their memory accesses and all \flush{} and \fence{} instructions with \flushedi{s} (for dependencies) and \nonflushedi{s} (for instructions optimized out as non-dependencies). 

\paragraph{Simple durability} 
We begin by considering how to guarantee durability using the \interface{} for any given linearizable algorithm. 
Izraelevitz et al.~\cite{izraelevitz2016linearizability} show that, for any linearizable data structure, if every load-acquire and store-release is accompanied by a \flush{} and a \fence{}, and stores are followed by a \flush{}, then the data structure becomes durable. We show that declaring these instructions as \flushedi{s} achieves the same guarantee. Furthermore, using our implementation of the \interface{} yields a much faster solution.

\begin{theorem}\label{thm:iz-semantics}
	Given a linearizable data structure, if we make all its loads and stores \flushedi{s}, then the resulting data structure is durably linearizable.
\end{theorem}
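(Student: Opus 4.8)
The plan is to take an arbitrary history $H$ of the transformed structure (every load and store made a \flushedi), delete all \fault{} events to obtain a crash-free history $H'$, and exhibit a linearization of $H'$ obeying the sequential specification of the data structure. The first observation is that turning an instruction into a \flushedi{} changes only its persistence obligations and not Condition~\ref{cond:seqSpec}: loads still return the value of the most recent linearized store, so \dram{} behaves as an atomic, linearizable memory with respect to the instruction linearization points of Definition~\ref{def:spec}. Consequently, each maximal segment of $H$ between two consecutive \fault{} events is just an execution of the original structure on atomic memory, and since that structure is linearizable, the \operation{s} of each such \emph{era} inherit a linearization order that respects the sequential specification.

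The crux --- and the step I expect to be the main obstacle --- is showing that the contents of \pmem{} at each \fault{} constitute a \emph{consistent, downward-closed prefix} of this order, so that the operations of the next era start from a state reachable by some set of earlier operations. I would fix a crash and let $S$ be the set of \operation{s} whose store values are present in \pmem{} at that crash, then establish two closure properties. First, $S$ contains every \operation{} that \emph{completed} before the crash: by Condition~\ref{cond:persistStore} a process depends on each of its own \flushedm{s}, and by Condition~\ref{cond:fence} all of a process's dependencies are persisted by the moment it completes an \operation. Second, $S$ is closed under dependencies: if $O\in S$ read a value written by a store $s$, then by Condition~\ref{cond:persistLoad} the reading process depends on $s$ (and on every earlier \flushedm{} on that location), and by Condition~\ref{cond:fence} $s$ is persisted before that process performs the shared store that first exposes $O$'s own effect. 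Chaining these implications along the dependency graph shows that a persisted effect can never depend on a lost one.

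The delicate part of this closure argument is the cross-process hand-off at \emph{shared} stores: a value written by process $j$ becomes a dependency of process $i$ only after $i$ executes a \flushedl{} reading it, so I must verify that Condition~\ref{cond:fence}'s guarantee for $i$ composes with $j$'s own persistence obligations in a way that never orders a persisted store before one it (transitively) depends on. I would package this as an invariant maintained throughout the execution: at every point, each location of \pmem{} holds the value of the latest persisted \flushedm{} on it, and the persisted \flushedm{s} are closed under ``linearized-earlier on the same location, among dependencies,'' so that replaying the persisted stores in linearization order reproduces exactly the crashed memory image.

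Finally, I would build the linearization of $H'$ by concatenating the per-era orders in real-time order, completing each pending \operation{} of an era that lies in its corresponding $S$ and discarding the rest (durable linearizability permits dropping operations not completed before a crash, and the spawning of fresh processes after each \fault{} means these are genuinely incomplete in $H'$). Because each $S$ is a consistent downward-closed prefix, the memory image beginning each era agrees with the cumulative effect of the linearized operations of all earlier eras, so applying Condition~\ref{cond:seqSpec} era by era certifies the full concatenated order against the sequential specification, giving durable linearizability. As a consistency check against the cited result, making \emph{every} load and store a \flushedi{} persists at least as much as Izraelevitz et al.'s load-acquire/store-release scheme, so no correctness-critical \flush{} is ever omitted.
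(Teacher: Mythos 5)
Your proposal takes essentially the same approach as the paper: the paper's proof sketch argues exactly that the \interface{} forbids ``inversions'' --- no persisted update may depend on an update that is lost at a crash --- by using Condition~\ref{cond:persistLoad} to generate load dependencies and Condition~\ref{cond:fence} to force them to be persisted before the next shared store or operation completion, which is precisely your dependency-closure argument for the set $S$. The paper defers the remaining details (the per-era decomposition, completing or discarding pending operations, and replaying the persisted prefix) to supplementary material, so your write-up is a more fleshed-out version of the same argument rather than a different route.
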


%
%
Recall that durable linearizability requires that a history be linearizable after we remove all crash events from it. Intuitively, this means that the state of memory after each crash should be `explainable' by a feasible history without a crash. This would hold if there haven't been any `inversions' between \dram{} and \pmem{} in a way that isn't allowed  by the operations. That is, a bad situation would be if some memory update $m$ had occurred, some process $i$ observed $m$, and made a modification $m'$ that depended on having seen value $m$. If $m'$ became persisted before $m$, and then a crash occurred, the state of memory following the crash would not be explainable by any feasible history, since $i$ would never have written $m'$ had it not read~$m$.

Loosely speaking, to prove Theorem~\ref{thm:iz-semantics}, we therefore have to show that no memory update  $m$ that has not yet been persisted could be followed by another update $m'$ that depended on $m$. Intuitively, the \interface{} can guarantee this; note that Condition~\ref{cond:fence} forces all values read by some process $i$ through a \flushedl{} to be persisted before the next \flushedm{} by $i$ (since Condition~\ref{cond:persistLoad} guarantees that $i$ would depend on all such values). Due to lack of space, the full proof is deferred to the supplementary material.

\paragraph{NVTraverse} While the above construction is very simple, and can be easily applied  to any linearizable algorithm to make it persistent, there may be opportunities to optimize such a construction if some instructions could identified as non-dependencies (marked as \nonflushedi{s}). This can give more flexibility to the underlying implementation to omit \flush{} and \fence{} instructions where possible.
Indeed, there are several constructions of durable data structures in the literature that do not persist every memory instruction. For example, Friedman et al.~\cite{friedman2020nvtraverse} present a general construction to make certain lock-free data structures persistent more efficiently than the construction of Izraelevitz et al. mentioned above.  In particular, they consider data structures in \emph{traversal form}, in which each operation has a read-only traversal phase followed by a short critical phase. Many lock-free data structures, including linked-lists, BSTs, and skip lists, can fit  this form. Friedman et al. show that such data structures do not need to execute any \flush{} instructions during the traversal phase. That is, any load in the traversal phase can be thought of as a \nonflushedl{}, and any instruction (load or store) in the critical phase can be thought of as a \flushedi{}. There is a short \emph{transition} between the traversal and critical phases in NVtraverse, in which some locations that were read during the traversals are flushed. This can be achieved by executing \flushedl{s} on those locations.

\paragraph{Other Algorithms} Many other NVRAM algorithms appear in the literature, with various techniques to optimize the interaction with \pmem. As a general rule of thumb, any instruction that isn't immediately followed by a \flush{} in such algorithms can be seen as a \nonflushedi{}, and any other instruction can be seen as a \flushedi. The `dependency' terminology is used intuitively in several works~\cite{friedman2018persistent,david2018logfree}; generally, non-dependencies in those works can be seen as \nonflushedi{s}.
	\section{The \FM{} Library and Interface}\label{sec:lib}

In this section, we introduce the \FM{} library, which implements the \interface{} defined in Section~\ref{sec:semantics}. 
At its core, \FM{} provides an interface with which to declare each instruction as either a p- or \nonflushedi{} \hedit{(using the \lstinline{pflag} parameter)}.



The \FM{} library is implemented in C++. To use the library, a programmer must declare variables as \lstinline{persist<>}. 
The \lstinline{persist} template can take any type. Declaring a variable in this way essentially allows the \FM{} library to track its persistence state. Whenever this variable is accessed for loads or stores, the instruction is overloaded with the library's implementation of it, which we call a \fminstruction. Each \fminstruction{} takes the standard arguments for its underlying instruction, in addition to a flag specifying whether it is a v- or a p-instruction. Finally, a special \lstinline{operation_completion} function is made available, which must be called at the end of each data structure operation. We show the basic interface in Figure~\ref{alg:interface}.

\begin{figure}
\begin{lstlisting}[keywords=none,basicstyle=\scriptsize\ttfamily]
class persist<T> {
public member functions:
	T load(bool pflag);
	void write(T value, bool pflag);
	bool CAS(T oldval, T newval, bool pflag);
	T exchange(T newVal);
	int FAA(int amount, bool pflag);
	// FAA is only supported if T is an int type	
public static functions:
	void operation_completion(); } ;
\end{lstlisting}
\caption{Basic interface of \FM. 
}
\label{alg:interface}
\end{figure}

The \FM{} library further improves the syntax of this interface to allow for minimal code changes to apply  it. In particular, when declaring a variable in the \lstinline{persist} template, a default \lstinline{pflag} value can be specified, making the \lstinline{pflag} argument optional when executing instructions on this variable. Furthermore, for C++, we overload the \lstinline{->} and \lstinline{=} operators to execute \FM{} loads and stores instead of the default one. These operators can only be used with the default \lstinline{pflag} value though, since it does not allow for an additional argument.

Algorithm~\ref{alg:library} shows an example of the implementation of a concurrent binary tree, achieving durability by making all instructions \flushedi{s}. The change over the original code is highlighted in red. All fields within a node are declared with the \lstinline{persist<>} template, and given the \flushed{} option as a default for the \lstinline{pflag}. This means that without any code changes, all accesses to these node fields will be \flushed{} \fminstruction{s}. 
In the example, all code elided inside the `$\ldots$' remains identical to the original implementation.


\renewcommand{\figurename}{Algorithm}
\begin{figure}
	\caption{\FM{} library used for a concurrent BST.}
\begin{lstlisting}[columns=fullflexible,breaklines=true,basicstyle=\scriptsize\ttfamily]
struct Node {
	@{\color{red}{persist<}}int\color{red}{, flush\_option::\flushed>}@ key;
	@{\color{red}{persist<}}T\color{red}{, flush\_option::\flushed>}@ value;
	@{\color{red}{persist<}}std::atomic<Node*>\color{red}{, flush\_option::\flushed>}@ right;
	@{\color{red}{persist<}}std::atomic<Node*>\color{red}{, flush\_option::\flushed>}@ left;	};

@{\color{red}{persist<}}Node*\color{red}{>}@ root;

// automatic BST lookup
void lookup(int key) {
	Node* node = root->left;
	while(node->left != nullptr) {
		if(key < node->key) node = node->left;
		else node = node->right;	}
	bool result = (node->key == key);
	@{\color{red}{persist::operation\_completion();}@
	return result;	}

// automatic BST insert
bool insert(K key, V val) {
	...
	@{\color{red}{persist::operation\_completion();}@
	return result;	}
\end{lstlisting}
\label{alg:library}
\end{figure}
\renewcommand{\figurename}{Figure}

The example above only shows the use of a single setting; all instructions are called as the default \flushedi{s}. The \FM{} library is in fact more flexible, and is still easy to use even for more complicated code. Figure~\ref{fig:maunal-compare} shows a comparison between the code of a concurrent durable BST optimized to minimize \flush{} and \fence{} instructions, with the \FM{} implementation of the same code. To save space, we only show the \lstinline{insert} operation, and elide all identical code  between the two versions. This optimized version of the durable BST does not execute any persistence instructions following most memory accesses. We therefore found that setting the default \lstinline{pflag} option to be a \nonflushedi{} yields the least amount of code change. 
We note that while not shown in these examples, it is also possible to leave a variable declaration as-is, without using the \lstinline{persist} template, if that variable never requires persistence. This use case arises in some algorithms. For example, Friedman et al.~\cite{friedman2018persistent} present a durable queue implementation that completely avoids flushing the head and tail pointers of the queue. In this case, these variables can be declared normally, without the \FM{} library.

\begin{figure*}
\begin{minipage}{.45\textwidth}
\begin{lstlisting}[columns=fullflexible,breaklines=true,basicstyle=\scriptsize\ttfamily]
Node* root; // root of manual BST without @\FM@
bool insert(K key, V val) {
	...
	PWB(&node->left);
	...
	node->right.CAS(exp, new);
	PWB(node->right);
	...
	PFENCE();
	return result;	}
\end{lstlisting}
\end{minipage}
\hspace{0.2cm}
\begin{minipage}{.48\textwidth}
\begin{lstlisting}[columns=fullflexible,breaklines=true,basicstyle=\scriptsize\ttfamily]
// manual BST with @\FM@
persist<Node*, flush_option::@\nonflushed@> root;
bool insert(K key, V val) {
	...
	node->left.load(); // to flush persistent stores
	...
	node->right.CAS(exp, new, flush_option::persisted);
	...
	persist::operation_completion();
	return result;	}
\end{lstlisting}
\end{minipage}
\caption{Comparison of a manually persistent BST implemented with and without the \FM{} library.}
\label{fig:maunal-compare}
\end{figure*}

\section{The Algorithm}\label{sec:algorithm}
We now describe the implementation of the \FM{} library, and prove that it satisfies the \interface{} specified in Section~\ref{sec:semantics}.
At a high-level, each \flushedm{} \fminstruction{} executes a \fence{} before its store, and a \flush{} on this location after the store. This means that already, conditions \ref{cond:seqSpec}, \ref{cond:persistStore} and \ref{cond:fence} are satisfied (\hedit{ignoring the dependencies from Condition \ref{cond:persistLoad}}). If we had a guarantee that every \flushedl{} to any location $\ell$ will always happen after persisting the most recent \flushedm{} on $\ell$, then Condition~\ref{cond:persistLoad} would be satisfied as well, without having to change the implementation of load instructions at all. However, this is not the case; since we cannot store and persist atomically, it is possible for another process to read a value written into $\ell$ by a shared \flushedm{} before the writing process persists. 

One way \hedit{to handle dependencies from} Condition~\ref{cond:persistLoad} is to have each \flushedl{} execute a \flush{} after reading its value. However, this would introduce many unnecessary \flush{s}, since most \flush{s} do not execute concurrently with a pending \flushedm{} on the same location. Our goal in this work is to avoid as much excessive flushing as possible.

The basic idea behind the implementation of \FM{} is to associate each \lstinline{persist} variable with a counter, which we call the \emph{\fmcounter}. Intuitively, this counter keeps track of the number of pending \flushedm{} \fminstruction{s}. When a \flushedm{} \fminstruction{} begins, it increments its associated \fmcounter. It then executes its modification, followed by a \flush{} on this location, and then decrements the counter. This counter is checked by all \flushedl{s} on this location, and if its value is non-zero, the \flushedl{} executes a \flush{} after reading the value. 
A location whose \fmcounter{} is non-zero is said to be \emph{tagged}, and \flushedl{s} only flush locations that are tagged (i.e. Flush if Tagged (FliT)); this is where the \FM{} library gets its name.

Store \fminstruction{s} also execute a \fence{} before beginning their execution, and another one before decrementing the counter (in the case of a \flushedm). These \fence{s} ensure that all modifications are persisted at the correct times according to Definition~\ref{def:spec}. In particular, the \fence{} before a store ensures Condition~\ref{cond:fence} holds, by making sure all values \hedit{\flush{ed} by this process (which includes all of its dependencies) have been persisted.} The \fence{} before decrementing the counter is required for Condition~\ref{cond:persistLoad}; if this \fence{} is not executed, a \flushedl{} may observe the \fmcounter{} at value $0$ and avoid flushing the location, even though the written value has not yet been persisted.

The \FM{} library implementation distinguished between shared and private accesses to the memory. The details above in fact describe the implementation for shared accesses.  If a given \fminstruction{} is private, then its implementation is more efficient; we can ignore the \fmcounter{} associated with the accessed location, and avoid the \fence{} before a private \flushedm. Intuitively, if a \fminstruction{} cannot be concurrent with any other, then the accessed location is guaranteed not to be tagged (i.e. its counter has value 0), and return to this state (in the case of a store) before the next \fminstruction{} accesses it. Therefore, there is no need to check it, or to leave any traces for other processes. Furthermore, note that Condition~\ref{cond:fence} of the \interface{} only requires persisting before \emph{shared} stores, so we can skip the \fence{} before private stores. Unless specified otherwise, we always discuss shared \fminstruction{s} in the text, since their implementation is more involved than that of private ones.
The pseudocode of the implementation of instructions on \lstinline{persist} variables is presented in Algorithm~\ref{alg:fmalg}. Recall that p- and v-instructions are distinguished by the \lstinline{pflag} argument. We combine all types of store \fminstruction{s} (CAS, FAA, write, etc) into one in the pseudocode, since their behavior is the same.

Note that we do not specify how each memory location is associated with a \fmcounter. In Section~\ref{sec:counter}, we discuss possible ways to assign counters to memory locations, but we note that this is flexible. In particular, having many concurrent stores to the same memory location, or sharing a \fmcounter{} among several locations, cannot result in unsafe behavior (though it may result in extra \flush{s} executed). 



\renewcommand{\figurename}{Algorithm}
\begin{figure}
	\caption{The Flush-Marking Algorithm}
\begin{lstlisting}[basicstyle=\scriptsize\ttfamily]
T shared-load(T* X, bool pflag) {
	T val = X.load();
	if (pflag) {
		if (@\fmcounter@(X) > 0) { PWB(X);}	}
	return val;	}  

T private-load(T* X, bool pflag){
	return X.load();	}  

void shared-store(T* X, T* args, bool pflag) {
	PFENCE();
	if (pflag) {
		@\fmcounter@(X).fetch&add(1);
		X.store(args); @\label{line:pstore}@
		PWB(X);    @\label{line:flushstore}@  
		PFENCE(); @\label{line:fence2}@
		@\fmcounter@(X).fetch&sub(1);      	}
	else X.store(args);	}

void private-store(T* X, T* args, bool pflag){
	if (pflag){
		X.store(args); @\label{line:nr-pstore}@
		PWB(); @\label{line:nr-flushstore}@  
		PFENCE(); @\label{line:nr-fence2}@
	} else X.store(args);	}

void completeOp(){
	PFENCE();	}
\end{lstlisting}
\label{alg:fmalg}
\end{figure}
\renewcommand{\figurename}{Figure}

We now argue that our implementation (Algorithm~\ref{alg:fmalg}) satisfies Definition~\ref{def:spec}. We begin with a simple lemma.

\begin{lemma}\label{lem:nonneg}
	The value  of any \fmcounter{} is always non-negative.
\end{lemma}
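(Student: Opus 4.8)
The plan is to prove the lemma by induction on the sequence of operations applied to any fixed flit-counter, showing that the counter value stays non-negative at every step. The key observation is structural: in Algorithm~\ref{alg:fmalg}, the only places where a flit-counter is modified are inside the \lstinline{shared-store} routine, specifically the \lstinline{fetch&add(1)} before the store and the \lstinline{fetch&sub(1)} after the flush and fence. No load instruction and no private instruction ever modifies a flit-counter, so these are the only two operations I need to track.

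First I would establish the invariant that every decrement of a given flit-counter is matched to a preceding increment performed by the same \lstinline{shared-store} invocation. Since each execution of \lstinline{shared-store} with \lstinline{pflag} true performs exactly one \lstinline{fetch&add(1)} at the start of its critical section and exactly one \lstinline{fetch&sub(1)} at the end, and since the increment strictly precedes the decrement in program order within that invocation, I can pair them up. The main thing to argue is that the decrement issued by a particular invocation can never ``run ahead'' of its own increment having taken effect on the shared counter: because both are atomic fetch-and-add primitives on the same memory location, and because within a single process the increment is sequenced before the decrement, the increment is linearized before the decrement.

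Given this pairing, I would argue that at any point in the execution, the current counter value equals the number of increments that have linearized minus the number of decrements that have linearized, and that every linearized decrement has a distinct matching increment that has already linearized. Hence the number of linearized decrements never exceeds the number of linearized increments, which forces the counter to remain at least zero. Formally, I would phrase this as an induction on the prefix of the linearization order of atomic operations on the counter: the counter starts at $0$ (its initial value), each increment preserves non-negativity trivially, and each decrement is safe because its matching increment already contributed a $+1$ that has not yet been cancelled by this same decrement.

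The main obstacle I anticipate is the concurrency bookkeeping: multiple processes may concurrently execute \lstinline{shared-store} on locations sharing the same flit-counter, so I cannot reason about a single process in isolation. The care needed is in formalizing the injective matching between decrements and earlier increments in the face of interleaving. I expect to handle this cleanly by working entirely with the linearization order of the atomic \lstinline{fetch&add}/\lstinline{fetch&sub} primitives on the counter (which is well-defined since they are atomic RMW operations on one location), and using the per-invocation program-order constraint---increment before decrement---to guarantee that the matched increment precedes its decrement in this global order. Everything else is then a routine counting argument, so I would keep that part brief.
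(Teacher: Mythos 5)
Your proposal is correct and follows essentially the same route as the paper's proof: both rest on the observations that only shared \flushedm{s} modify a \fmcounter{}, that each such invocation performs exactly one increment followed by exactly one decrement, and that pairing each decrement with its own preceding increment yields non-negativity by a counting argument. The paper states this more informally (each invocation's ``balance'' is $0$ or $1$ at all times), while you add the explicit linearization-order induction, but the underlying idea is identical.
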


\begin{proof}
	Only \flushedm{s} change the value  of an \fmcounter. Furthermore, each \flushedm{} increments the \fmcounter{} associated with its location exactly once, and decrements the same counter exactly once. The increment is always executed before the decrement. Therefore, the balance on the \fmcounter{} after a \flushedm{} terminates is  always $0$, and the balance during a \flushedm{} is either $0$ or  $1$.
\end{proof}

\begin{theorem}
	Algorithm~\ref{alg:fmalg} satisfies Definition~\ref{def:spec}.
\end{theorem}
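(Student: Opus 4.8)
The plan is to exhibit linearization points and then verify the four conditions of Definition~\ref{def:spec} in turn, with essentially all the work going into Condition~\ref{cond:fence}. I would take the linearization point of each \fminstruction{} to be the moment it touches volatile memory, i.e. the \lstinline{X.load()} in a load and the \lstinline{X.store(args)} in a store. With this choice Condition~\ref{cond:seqSpec} is immediate: the \fmcounter{s} live in memory disjoint from the \lstinline{persist} variables, and the surrounding \flush{}/\fence{}/counter operations never alter the value stored at $\ell$, so each load reads exactly the value placed by the most recent linearized store, as in ordinary volatile semantics. Conditions~\ref{cond:persistStore} and~\ref{cond:persistLoad} require nothing to prove: they simply declare which stores each process depends on, and I adopt exactly that (minimal) dependency relation. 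Thus the theorem reduces to establishing Condition~\ref{cond:fence} for this dependency set.

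Before the case analysis I would set up two auxiliary facts. First, a \emph{monotonicity} observation: by cache coherence a location $\ell$ has a single current value at any instant, every \flush{} writes back that current value, and each store overwrites $\ell$ with a value later in linearization order; hence the value of $\ell$ in \pmem{} only ever advances in linearization order and never regresses. This lets me read ``the value of store $s$ is persisted by time $t$'' as ``by $t$ the content of $\ell$ in \pmem{} reflects $s$ or some later-linearized store,'' which is the operative meaning and sidesteps the fact that several dependency stores may target the same $\ell$. Second, using Lemma~\ref{lem:nonneg} I would show that the \fmcounter{} of $\ell$ equals the number of currently in-flight \flushedm{s} on $\ell$ (those past their \lstinline{fetch&add} but not yet their \lstinline{fetch&sub}); consequently, if a process reads the \fmcounter{} as $0$ at a time after a \flushedm{} $s$ has incremented it, then $s$ must already have decremented it. Since in \lstinline{shared-store} the increment precedes the store while the \flush{} and \fence{} on Lines~\ref{line:flushstore}--\ref{line:fence2} precede the decrement, a completed decrement of $s$ certifies that a value $\ge s$ in linearization order has already reached \pmem{}.

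I would then prove Condition~\ref{cond:fence} by splitting on the two kinds of time $t$. If $t$ is the linearization of a shared store by $i$, the unconditional \fence{} at the top of \lstinline{shared-store} completes just before $t$; if $t$ is an \operation{} completion, \lstinline{completeOp} is itself a \fence. Either way there is a \fence{} by $i$ at or immediately before $t$ that pushes every location $i$ has \flush{ed} into \pmem{} by $t$. For a store dependency (Condition~\ref{cond:persistStore}), $s$ is one of $i$'s own \flushedm{s}; since $i$ is sequential, $s$ ran to completion before $t$, and its own \flush{}/\fence{} on Lines~\ref{line:flushstore}--\ref{line:fence2} already persisted a value $\ge s$, which monotonicity preserves through $t$. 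For a load dependency (Condition~\ref{cond:persistLoad}) arising from a \flushedl{} $r$ by $i$ on $\ell$ that read store $s$, I distinguish the two branches of \lstinline{shared-load}: if $r$ saw the \fmcounter{} of $\ell$ positive it issued a \flush{} on $\ell$, and the \fence{} at $t$ (which follows $r$ in $i$'s program order) drives that flushed value into \pmem{} by $t$; if $r$ saw the \fmcounter{} at $0$, then by the second auxiliary fact $s$ had already decremented, so a value $\ge s$ was in \pmem{} before $r$ and a fortiori by $t$. Monotonicity again guarantees the persistent state still reflects $s$ at $t$.

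The main obstacle is exactly this last sub-case of Condition~\ref{cond:persistLoad}: a \flushedl{} that observes the counter at $0$ performs no \flush{}, so its correctness rests entirely on someone else having already persisted the value, and I must rule out the counter reading $0$ prematurely. This is where Lemma~\ref{lem:nonneg} together with the ``increment-before-store, \fence{}-before-decrement'' ordering are indispensable, and where the monotonicity observation is needed to handle concurrent overwrites of the same location cleanly. Finally I would dispatch the private instructions, which are easy: a private location is accessed by a single process, so its \fmcounter{} stays $0$, every private \flushedm{} self-persists via the \flush{} and \fence{} on Lines~\ref{line:nr-flushstore}--\ref{line:nr-fence2} before returning, and every private \flushedl{} reads a value already in \pmem; hence no counter check or extra \fence{} is required and Conditions~\ref{cond:persistStore}--\ref{cond:fence} follow directly.
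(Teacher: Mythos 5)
Your proposal is correct and takes essentially the same route as the paper's own proof: the same linearization points (the volatile-memory access), the same case split for a shared \flushedl{} on whether it sees a positive \fmcounter{} (flush, then rely on the later \fence) or zero (use Lemma~\ref{lem:nonneg} plus the increment-before-store / \fence{}-before-decrement ordering to conclude the value is already persisted), and the same discharge of Condition~\ref{cond:fence} via the \fence{} at the start of \lstinline{shared-store} and in \lstinline{completeOp}. The only differences are organizational and in rigor---you correctly note that Conditions~\ref{cond:persistStore} and~\ref{cond:persistLoad} merely define the dependency relation and fold their verification into Condition~\ref{cond:fence}, and you make explicit two facts (monotonicity of the persisted value in linearization order, and the counter equaling the number of in-flight \flushedm{s}) that the paper leaves implicit.
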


\begin{proof}
	We let the linearization point of each \fminstruction{} on a location $X$ be the time at which it accesses the volatile memory at $X$. To argue that dependencies are persisted by the time Condition~\ref{cond:fence} dictates, we argue that a \flush{} is executed on all dependencies defined in Conditions~\ref{cond:persistStore} and~\ref{cond:persistLoad}, and that a \fence{} is executed after these \flush{s} and before the time at which they need to be persisted. Given this approach, we handle each condition separately.
	
	\textsc{Condition~\ref{cond:seqSpec}.} Note that all load (resp. store) \fminstruction{s} execute a single load (resp. store) instruction, and the arguments/return values of thetherefore \fminstruction{} are the same as the atomic instruction that it executes. 
	
	\textsc{Condition~\ref{cond:persistStore}.} Each \flushedm{} executes a \flush{} instruction (Line~\ref{line:flushstore} or~\ref{line:nr-flushstore}),  followed by a \fence{} instruction (Line~\ref{line:fence2} or~\ref{line:nr-fence2}), both of which are after its atomic store instruction (Line~\ref{line:pstore} or~\ref{line:nr-pstore}). Therefore, the value stored by this \fminstruction{} is \flush{ed} and \fence{d} before the \fminstruction{}  terminates,  and in particular has been \flush{ed} before the executing process executes another \flushedm{} or completes an operation.
	
	\textsc{Condition~\ref{cond:persistLoad}.} Consider a \flushedl{}, $r$ on location $X$, executed by process $i$. We consider two cases: either $r$ executes a \flush, or it does not. If $r$ executes a \flush{} on $X$, then, since a \flushedm{} linearizes when it is applied to the \dram{}, all dependencies on this location get \flush{ed}, and we are done. 
	If $r$ does not execute a \flush, then we again split into two cases. If $r$ is a shared-load, by the algorithm, $r$ must have read a non-positive value in $X$'s \fmcounter. By Lemma~\ref{lem:nonneg}, this means $r$ read the value  $0$ on the \fmcounter. Consider any shared \flushedm{} \fminstruction{}, $s$, on $X$. $s$'s  store linearizes after its increment of  the \fmcounter{},  and $s$ \flush{s} $X$ after its store linearizes, and executes a \fence{}, before decrementing the \fmcounter. Therefore, if $s$'s store has linearized and the \fmcounter{'s} value is $0$, then $s$'s value must already be persisted, so the condition holds. Furthermore, any private \flushedm{}, $s'$, on $X$, must have completed its execution, executing a \flush{} and a \fence{}, before $r$ could access it. Finally, if $r$ is private, then no store could be pending while $r$ executes. In particular, this means that all values stored by a \flushedm{} on this location are already persisted, by the argument above.
	
	\textsc{Condition~\ref{cond:fence}.} Note that every shared store \fminstruction{} executes a \fence{} before any other instruction. Therefore, a process $i$ executing a shared store \fminstruction{} $s$ ensures that all values on which it executed a \flush{} before beginning $s$ are persisted before $s$ linearizes. By the arguments above, this includes all dependencies of $i$, unless they have already been persisted earlier. Similarly, $i$ also executes a \fence{} at the end of each operation, when \lstinline{completeOp()} is called.
\end{proof}


\subsection{Placement of the Counter}\label{sec:counter}


In Algorithm~\ref{alg:fmalg}, we intentionally abstracted away how \fmcounter{s} are assigned to memory locations, using the unspecified \lstinline{flit-counter()}  function. Note that the \fmcounter{} is completely decoupled from the memory locations it represents, so it can be placed anywhere, and can be shared by any number of locations. Furthermore, the \fmcounter{s} can be very small; the maximum value in a \fmcounter{} is at most the number of concurrent processes in the system, since each process can increment at most on \fmcounter{} at most once before decrementing it. Therefore, on most machines, including the one we test on, $8$ bits suffice to store a \fmcounter{} without the possibility of overflow. 

In this section, we discuss a couple of practical implementations for the  \lstinline{flit-counter()} function, which we later implement and test. However, we remind the reader that other practical implementations are possible, and that the \FM{} library allows the flexibility of modifying the counter placement to suit the needs of the user.

\paragraph{Adjacent Counter.} One straightforward way to implement the \fmcounter{}s is to place each counter adjacent to the memory word that uses it. That is, we can make each memory word in an algorithm be a double-word, and use the second word for the \fmcounter. The advantage of this is that the counter for each word $X$ is on the same cache line as $X$, and therefore accessing it has minimal cost. However, this approach can be inconvenient and wasteful, since this fundamentally changes the memory layout of a given data structure's objects. Indeed, an object that fit in a single cache line might overflow it if all its fields double in size.

\paragraph{Hashed Counter.} Another \fmcounter{} placement strategy is to use a hash table; each memory location $X$ hashes into the table, which has a counter in each of its entries. This method allows different memory locations to use the same \fmcounter{}. 
The number of collisions depends on the ratio of the size of the hash table and the number of threads in the system, since each thread can access at most one hash-table entry per \fminstruction.
The advantage of this approach is two-fold. First, it saves memory. In many data structures, especially if they are not highly-contended, most memory locations will have no pending \flushedm{s} most of the time. This means that sharing counters results in a negligible amount of extra flushing. Secondly, it does not require changing the layout of memory in the data structure itself, since the \fmcounter{s} are not placed in the same cache lines as the data structure elements. However, this can also be a downside in some situations; since the \fmcounter{} is in a separate cache line, accessing it could incur an additional cache miss. 

Note also that the hashing method allows us to compact the memory usage of \fmcounter{s} even further, by squeezing several counters into each word. Recall that 8 bits suffice for each \fmcounter{}, so we can fit 8 counters in a single memory word. However, compacting the \fmcounter{s} in this way can increase false-sharing; many different memory locations could be mapped to counters on the same cache line. 
	\section{Evaluation}\label{sec:evaluation}

The \FM{} library's implementation optimizes \flush{} instructions on shared locations. To highlight its effects and focus on them in the evaluation, we evaluate the library applied to lock-free data  structures, in which most memory accesses are shared. 
We apply the \FM{} library to 4 lock-free data structures; a linked-list~\cite{harris2001pragmatic}, a binary search tree (BST)~\cite{aravind14bst}, a skiplist~\cite{fraser2003practical}, and a hash table which uses Harris's linked list to implement each bucket~\cite{harris2001pragmatic}. For each data structure, we implement three different ways of making it durable; the first is the \textbf{automatic} transformation discussed in Section~\ref{sec:applications}, in which all instructions are made \flushedi{s}, the second is using the \textbf{NVtraverse} framework~\cite{friedman2020nvtraverse}, and the third is a hand-tuned (\textbf{manual}) construction based on algorithms presented by David et al~\cite{david2018logfree}. 
We also study  the effect of various policies for placing the \fmcounter{s} in memory with respect to the memory locations they are associated  with. In particular, we implement the adjacent counter variant (\fmadjacent) and a hash table (\fmhash), for which we test five different sizes. We evaluate the tradeoffs of the different approaches.
Finally, we also implement the link-and-persist technique on the data structures that can support it. We compare these implementations of the \interface{} with the \textbf{\noflit{}} version, which places \flush{} and \fence{} instructions where necessary, but does not utilize any tagging method to avoid \flush{s} in the loads.   \Guy{Not sure why it is called an ad-hoc method?} \Hao{I agree ad-hoc is a weird name, but I'm not sure what else we can call it.}

\subsection{Setup.}

We run experiments on a machine with two Xeon Gold 6252 processors (24 cores, 3.7GHz max frequency, 33MB L3 cache, with 2-way hyperthreading).
The machine has 375GB of DRAM and 3TB of NVRAM (Intel Optane DC memory), organized as 12 $\times$ 256GB  DIMMS (6 per processor).

On Intel/AMD architectures~\cite{intel,amd}, the three available flush instructions are \emph{clflush}, \emph{clflushopt}, and \emph{clwb}, where \emph{clwb} is not blocking and supposed to not invalidate the cache. Thus, \emph{clwb} is the most efficient one, and is the one we use in our implementation. The processors are based on the Cascade Lake SP microarchitecture, which supports the \texttt{clwb} instruction for flushing cache lines (\flush{}). However, its implementation of \emph{clwb} still invalidates cache lines.
Performance might be improved in future platforms where clwb does not invalidate cache lines.
For ordering, we use the \emph{sfence} instruction. 
The equivalent instructions on ARM are \emph{DC CVAP} and a full system \emph{DSB} instruction for flush and fence execution~\cite{arm}.
We use \emph{libvmmalloc} from the PMDK library to place all dynamically allocated objects in NVRAM, which is configured in an App-Direct mode \hedit{to let the NVRAM reside alongside the DRAM and allow byte addressable access} \Guy{say a few words about app-direct mode}. All other objects are stored in RAM.
The operating system is Fedora 27 (Server Edition), and the code was written in C++ and compiled using g++ (GCC) version~7.3.1.
We use \texttt{std::atomic}s with relaxed memory orders where appropriate.
In our implementation of Algorithm~\ref{alg:fmalg}, some of the \fence{} instructions can be omitted because on our Intel machine, atomic instructions (such as CAS and FAA) perform an implicit \fence{}.

We avoid crossing NUMA-node boundaries, since unexpected effects have been observed when allocating across NUMA nodes on the NVRAM .
Hyperthreading is used for experiments with more than 24 threads. Unless stated otherwise, all data structures are tested with three different workloads; $0\%$ updates, $5\%$ updates, and $50\%$ updates. Updates are split $50/50$ between inserts and deletes, and chosen randomly.
All experiments were run for 5 seconds and an average of 5 runs is reported. 
A grey dotted line (shown in some plots) represents the original (non-persisted) form of the tested data structure.


%
%
%


\subsection{Testing Hash Table Size}

We begin our evaluation by testing the effect of the size of the \fmhash{} on performance. 
There is a trade-off between the size of the \fmhash{} and the number of collisions on the counters; keeping the table small allows it to fit in cache, making accesses to it potentially cheaper. However, if it is too small, hash collisions could cause cache coherence misses. Figure~\ref{fig:hashtables} shows the result of different \fmhash{} sizes on the BST, with three different update ratios. We show the automatic BST  implementation. Other data structures showed similar patterns, and are omitted for brevity.

We first note that for $0\%$ updates, we see that the larger the hash table, the lower the throughput. This is as expected; as the \fmhash{} grows, less of it fits in cache, and therefore accesses to it more frequently incur cache misses. Furthermore, at $0\%$ updates, the \fmcounter{s} are never updated, so coherence misses are not a concern.
Starting at $5\%$ updates, we see a stark performance drop for the $4$KB hash table. Two types of hash collisions can occur in this framework: (1) two locations hash to the same counter, resulting in potentially redundant \flush{s} executed, if the \fmcounter{} balance is inflated due to an ongoing \flushedm{} on a different location.
More severe, however, is the second type of hash collisions: (2) cache line collisions; the 4K \fmcounter{s} in the hash table are packed into only $64$ cache lines. This means that if any two \flushedi{s}, at least one of which is a \flushedm{}, occur on locations that hash to the same cache line (quite likely), they suffer a coherence cache miss. In such a small hash table, this effect is very prominent.
This is much less noticeable in the larger hash tables.

For the rest of the plots, we show only one hash table size; the 1MB \fmhash. We note that this size fits in the L3 cache, but is large enough to avoid most hash collisions.

\ifx\arxiv\undefined
\begin{figure}
	\includegraphics[width=0.46\textwidth]{graphs/htsize-throughput:_Bst_10000_keys_automatic_44_threads.png}
	\caption{\small Tuning hashtable size for the FliT library. Throughput shown is for the automatic BST with 10K keys.}
	\label{fig:hashtables}
\end{figure}
\else
\begin{figure} \centering
	\includegraphics[width=0.6\textwidth]{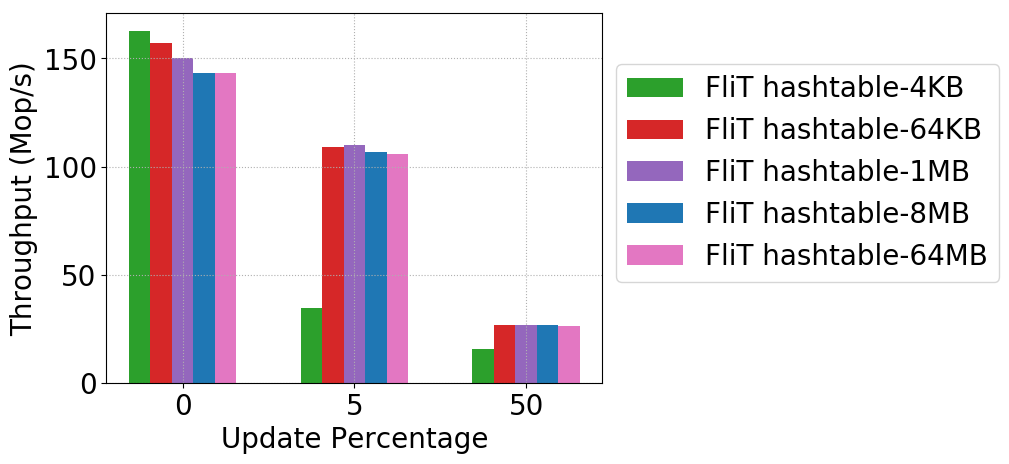}
	\caption{\small Tuning hashtable size for the FliT library. Throughput shown is for the automatic BST with 10K keys.}
	\label{fig:hashtables}
\end{figure}
\fi

\subsection{Varying Number  of Threads}
We now consider the scalability of data structures that use the \FM{} library, as the number of threads grows. The results can be seen in Figure~\ref{fig:scale}. Again, the automatic 10K BST  with $5\%$ updates is shown. Note that in this plot, aside from the \fmhash{}, we show a few different settings for comparison. In particular, the gray line shows a non-persistent version of the data structure, in which no \flush{} or \fence{} instructions are issued. This forms a baseline that cannot be significantly outperformed by any persistent implementation. Furthermore, the blue line shows a BST version implemented with \noflit{} \flush{} and \fence{} usage, without applying  the \FM{} library at all. This version performs many more \flush{s}, and its performance and scalability suffer. We show both the \fmhash{} and and the \fmadjacent{} versions of the \FM{} library. Both of them scale similarly, and quite well.


\ifx\arxiv\undefined
\begin{figure}
	\includegraphics[width=0.46\textwidth]{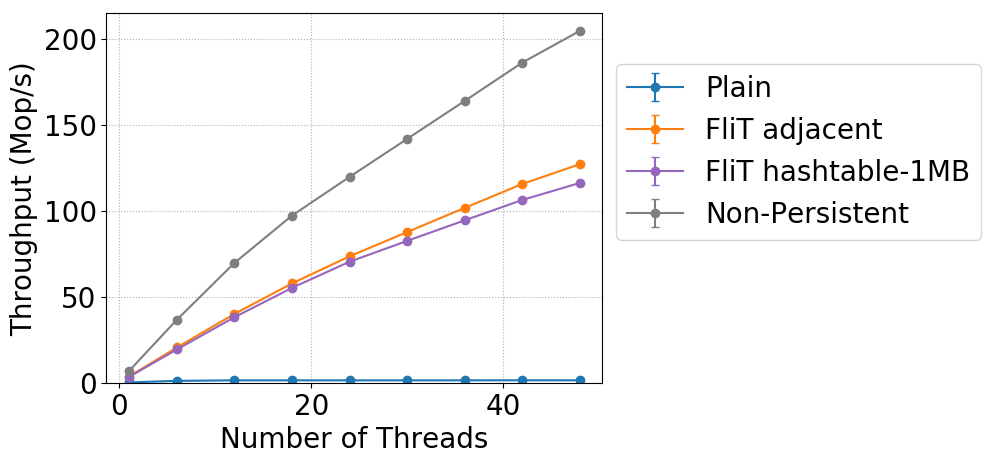}
	\caption{\small Scalability graphs for the automatic BST with 10K keys and 5\% updates.}
	\label{fig:scale}
\end{figure}
\else
\begin{figure} \centering
	\includegraphics[width=0.6\textwidth]{graphs/Bst_10000_keys_5_updates_Auto.png}
	\caption{\small Scalability graphs for the automatic BST with 10K keys and 5\% updates.}
	\label{fig:scale}
\end{figure}
\fi

\subsection{Comparing Durability Methods}

Figure~\ref{fig:varyversions} shows the four implemented data structures, each with their three different methods of durability: automatic, NVtraverse, and manual. When using the \FM{} library, these methods differ in how many \nonflushedi{s} they execute; the automatic version only executes \flushedi{s}, the NVtraverse executes many \nonflushedl{s} while traversing the data structure, and the manual version carefully reasons about these individual data structures to make a larger fraction of the instructions be \nonflushed. 
All plots show $5\%$ updates, and the smaller size of the tested data structure (10K nodes for the scalable data structures, and 128 nodes for the linear linked-list). For each setting, we show a \noflit{} implementation, \fmadjacent{}, \fmhash, and link-and-persist where applicable. 

Generally speaking, the link-and-persist method follows the same patterns as the \FM{} implementations.
We note that the more optimized the underlying durability implementation is, the less it benefits from \FM{}. However, for all settings, the performance boost from \FM{} is still substantial; while in the automatic version, \FM{} boosts throughput by  a factor of at least $6.68\times$ (in the hash table), and at most $99.5\times$ (in the skiplist), we still observe an improvement of at least $2.17\times$ when using \FM{} in all data structures under all durability methods. However, it is also important to note that across the board, the optimized durability methods with \FM{} outperform the automatic durability method with \FM. Thus, while benefiting less from the \FM{} library, optimizations that allow using more \nonflushedi{s} are still useful, and should still be implemented using the \FM{} library.

Interestingly, while optimized solutions do perform better, the automatic version implemented with the \FM{} library performs surprisingly well; it significantly outperforms the NVtraverse and manual versions without the \FM{} library for the BST and hash table, and approximately matches their performance in the linked-list and skiplist.

\ifx\arxiv\undefined
\begin{figure*}[!h]
	\begin{tabular}{@{}c@{}c@{}c@{}c@{}c@{}c}
		\multicolumn{5}{l}{\includegraphics[width=0.65\textwidth]{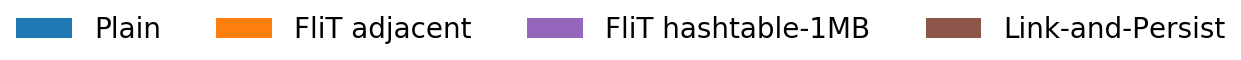}} \\
		&\includegraphics[width=0.25\textwidth]{graphs/throughput:_Bst_10000_keys_5_updates_44_threads.png} 
		&\includegraphics[width=0.25\textwidth]{graphs/throughput:_Hashtable_10000_keys_5_updates_44_threads.png}
		&\includegraphics[width=0.25\textwidth]{graphs/throughput:_List_128_keys_5_updates_44_threads.png}
		&\includegraphics[width=0.25\textwidth]{graphs/throughput:_Skiplist_10000_keys_5_updates_44_threads.png} \\
		&(a) \small BST, 10K keys
		&(b) \small Hashtable, 10K keys
		&(c) \small Linked List, 128 keys
		&(d) \small Skiplist, 10K keys
	\end{tabular}
	\caption{\small Throughput results with 44 threads and 5\% updates. Dotted bar represents throughput of the non-persistent version of each data structure. }
	\label{fig:varyversions}
\end{figure*}
\else
\begin{figure*}[!h]
	\begin{tabular}{@{}c@{}c@{}c@{}c@{}c@{}c}
		\multicolumn{5}{l}{\includegraphics[width=0.65\textwidth]{graphs/throughput_compare_versions_legend.png}} \\
		&\includegraphics[width=0.25\textwidth]{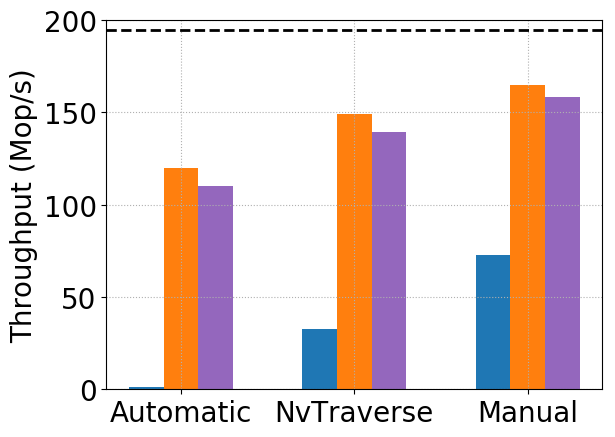} 
		&\includegraphics[width=0.25\textwidth]{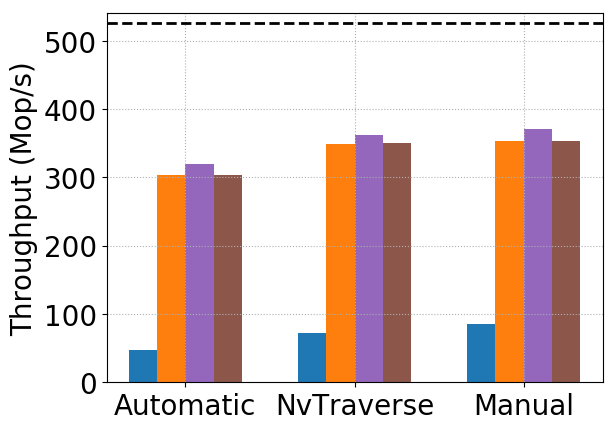}
		&\includegraphics[width=0.25\textwidth]{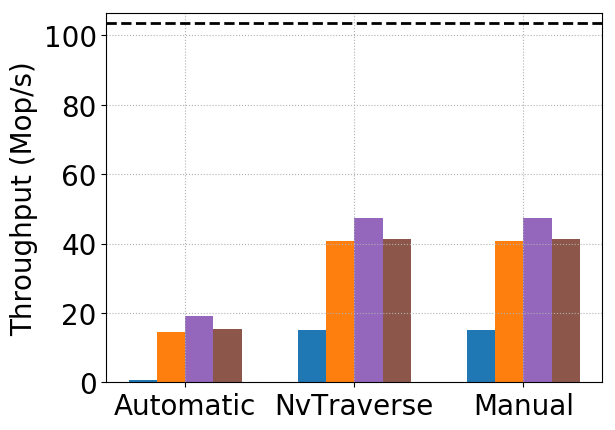}
		&\includegraphics[width=0.25\textwidth]{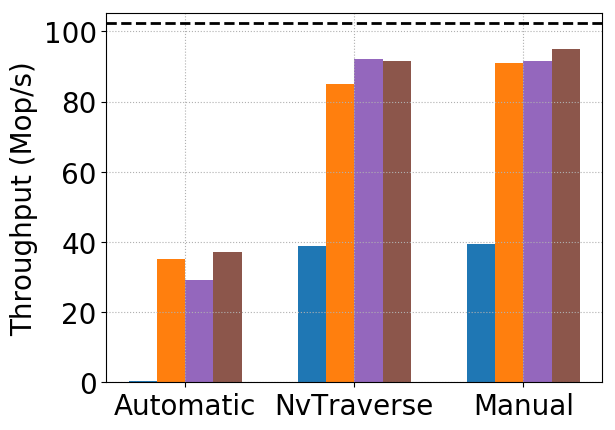} \\
		&(a) \small BST, 10K keys
		&(b) \small Hashtable, 10K keys
		&(c) \small Linked List, 128 keys
		&(d) \small Skiplist, 10K keys
	\end{tabular}
	\caption{\small Throughput results with 44 threads and 5\% updates. Dotted bar represents throughput of the non-persistent version of each data structure. }
	\label{fig:varyversions}
\end{figure*}
\fi

\subsection{Effect of updates.}

In Figure~\ref{fig:varyupdates}, we show each data structure with two different sizes, and in each subplot, we vary the update ratio of the workload. These plots are normalized to the throughput of the non-persistent baseline for each data structure. It is easy to see that the more updates executed, the worse the performance of all persistent versions when compared to the non-persistent baseline. This is expected; \flush{} and \fence{} instructions are executed more the more update operations occur. Note that in $0\%$ update workloads, no \flush{} or \fence{} instructions are executed, other than in initialization and at the end of each operation in the \FM{} and link-and-persist versions, since loads only ever execute a \flush{} if the location is tagged (and only \flushedm{s} can tag memory locations).

Furthermore, in $0\%$ updates, the \fmadjacent{} and the link-and-persist do better than the hash-table variant. This is because the latter implementations never have to incur an extra cache miss to access the \fmcounter{}, whereas the \fmhash{} incurs L2 misses every time it accesses the counter.

\subsection{Comparing \FM{} and Link-and-Persist}
We note that in general, the \fmadjacent{} and the link-and-persist implementations perform almost identically. This is because they both avoid this extra cache miss when accessing the \fmcounter{} (or flush-bit in the case of link-and-persist). The exception to this rule is in the skiplist, where link-and-persist outperforms the \fmadjacent. This is because \fmadjacent{} doubles the size of each node. In most data structures, it goes unnoticed, since each node still fits in a single cache line. However, the skiplist node stores many pointers, and thus can overflow a cache line when each word in it is doubled to fit the \fmcounter. This problem does not occur with the link-and-persist. However, we note that link-and-persist is not as general, and cannot be implemented with the BST, since this BST algorithm makes use of all bits in each word.

Interestingly, while \fmadjacent/link-and-persist perform best when there are $0\%$ updates, this is not always the case when more updates occur. This is most noticeable in the smaller hash table and linked-list implementations (Figures~\ref{fig:varyupdates}b and c). This is because, while in our implementation, we use Intel's \emph{clwb} instruction to perform \flush{s}, which should not invalidate cache lines update flushing, invalidations still occur. Indeed, Intel confirms that clwb, while available for use, is not currently implemented in hardware. Therefore, \flushedm{s} in the \fmadjacent{} incur a cache miss when decrementing the \fmcounter, since they always do so after having executed a clwb on that cache line, thereby evicting it from memory. The same thing happens in the link-and-persist implementation, when flipping the flush-bit after having flushed the cache line. Since the \fmhash{} does not place the \fmcounter{} on the same cache line as its \flushedm{} is accessing, decrementing the \fmcounter{} does not incur this cache miss. This effect is less prominent in larger data structures, in which traversing the data structure dominates the overall execution time. Furthermore, we believe that this effect will disappear once Intel implement their non-invalidating flush option in hardware.

\subsection{Comparing Data Structure Size}

Like the advantage of the \fmhash{} over the \fmadjacent, some patterns observed in the smaller data structures become less prominent when run on a larger instantiation of data structure.  In particular, for large data structures, the persistent versions almost match the performance of the non-persistent baseline. This is because in large data structures, the runtime of an operation is dominated by the need to traverse long chains of nodes, incurring many cache misses (since the large BST, Hash Table, and Skiplist do not fit in cache). The overhead of occasional \flush{s} and \fence{s} becomes less significant.

\ifx\arxiv\undefined
\begin{figure*}[!h]
\begin{tabular}{@{}c@{}c@{}c@{}c@{}c@{}c}
	\multicolumn{5}{l}{\includegraphics[width=0.65\textwidth]{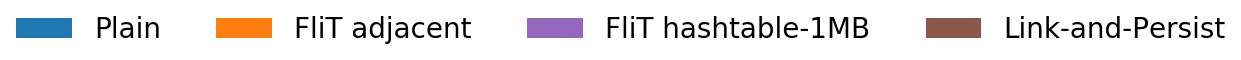}} \\
	&\includegraphics[width=0.25\textwidth]{graphs/throughput:_Bst_10000_keys_automatic_44_threads.png} 
	&\includegraphics[width=0.25\textwidth]{graphs/throughput:_Hashtable_10000_keys_automatic_44_threads.png}
	&\includegraphics[width=0.25\textwidth]{graphs/throughput:_List_128_keys_automatic_44_threads.png} 
	&\includegraphics[width=0.25\textwidth]{graphs/throughput:_Skiplist_10000_keys_automatic_44_threads.png} \\ 
	&(a) \small BST, 10K keys
	&(b) \small Hashtable, 10K keys
	&(c) \small Linked List, 128 keys
	&(d) \small Skiplist, 10K keys \\
	&\includegraphics[width=0.25\textwidth]{graphs/throughput:_Bst_10000000_keys_automatic_44_threads.png}  
	&\includegraphics[width=0.25\textwidth]{graphs/throughput:_Hashtable_10000000_keys_automatic_44_threads.png}
	&\includegraphics[width=0.25\textwidth]{graphs/throughput:_List_4096_keys_automatic_44_threads.png}
	&\includegraphics[width=0.25\textwidth]{graphs/throughput:_Skiplist_1000000_keys_automatic_44_threads.png} \\
	&(a) \small BST, 10M keys
	&(b) \small Hashtable, 10M keys
	&(c) \small Linked List, 4K keys
	&(d) \small Skiplist, 10M keys
\end{tabular}
\caption{\small Throughput results for 44 threads, automatic, normalized to the throughput of the non-persistent version of each data structure.}
\label{fig:varyupdates}
\end{figure*}
\else
\begin{figure*}[!h]
	\begin{tabular}{@{}c@{}c@{}c@{}c@{}c@{}c}
		\multicolumn{5}{l}{\includegraphics[width=0.65\textwidth]{graphs/throughput_compare_updates_legend.png}} \\
		&\includegraphics[width=0.25\textwidth]{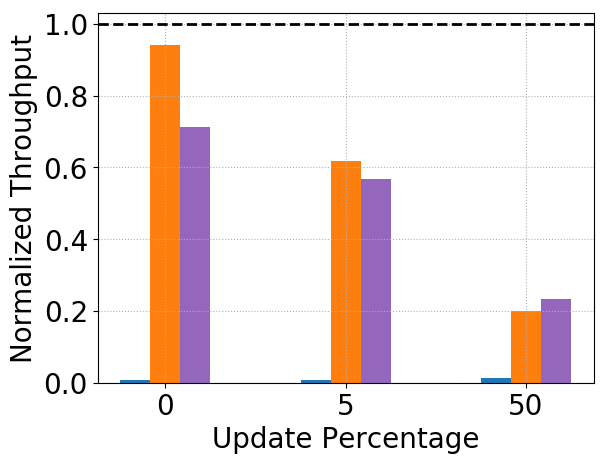} 
		&\includegraphics[width=0.25\textwidth]{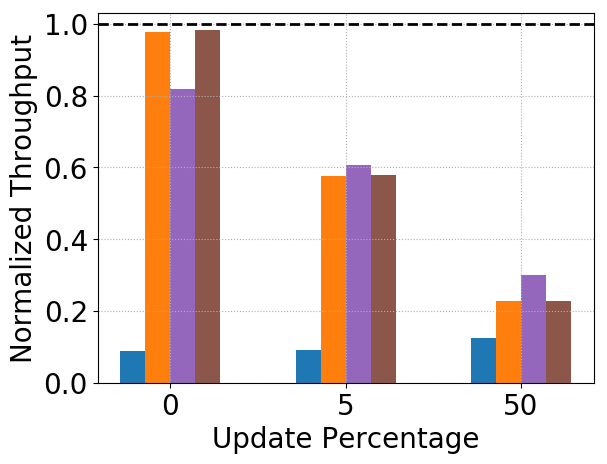}
		&\includegraphics[width=0.25\textwidth]{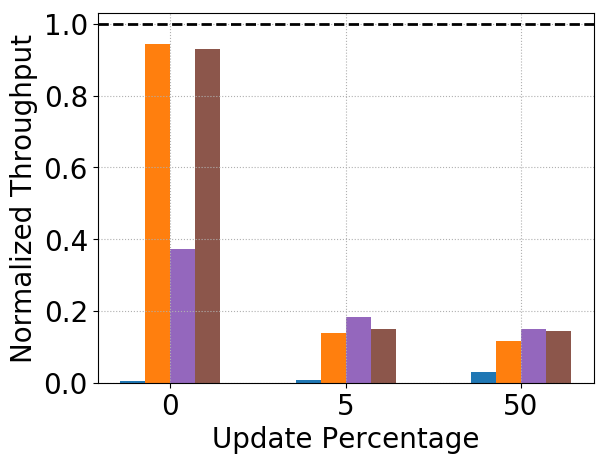} 
		&\includegraphics[width=0.25\textwidth]{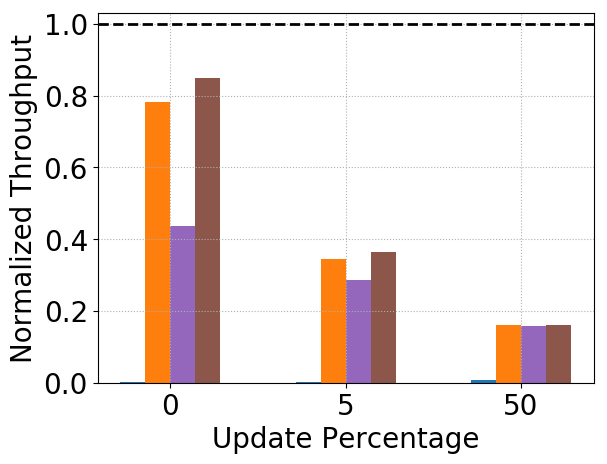} \\ 
		&(a) \small BST, 10K keys
		&(b) \small Hashtable, 10K keys
		&(c) \small Linked List, 128 keys
		&(d) \small Skiplist, 10K keys \\
		&\includegraphics[width=0.25\textwidth]{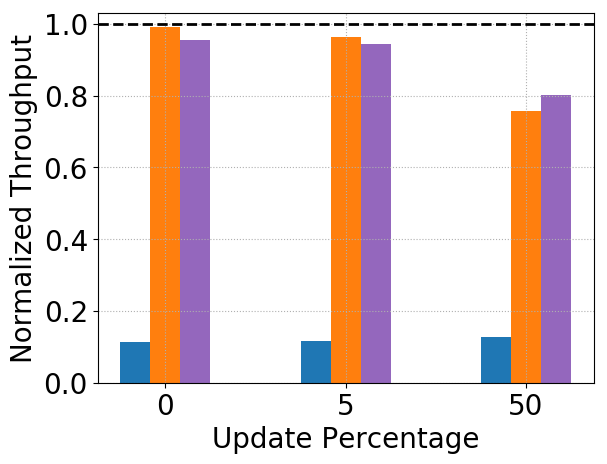}  
		&\includegraphics[width=0.25\textwidth]{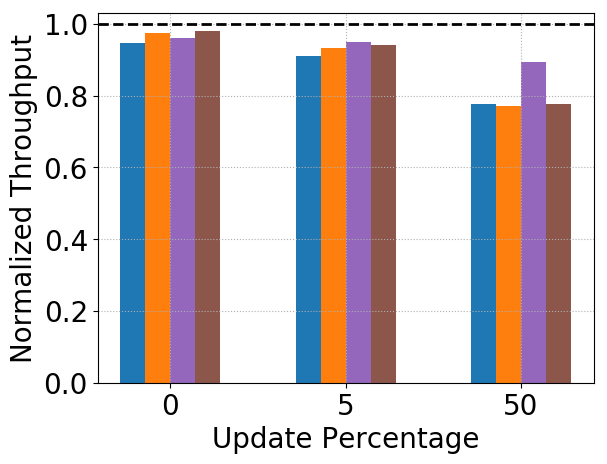}
		&\includegraphics[width=0.25\textwidth]{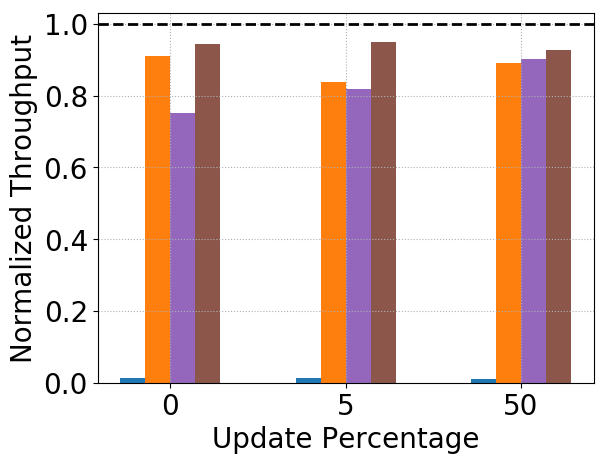}
		&\includegraphics[width=0.25\textwidth]{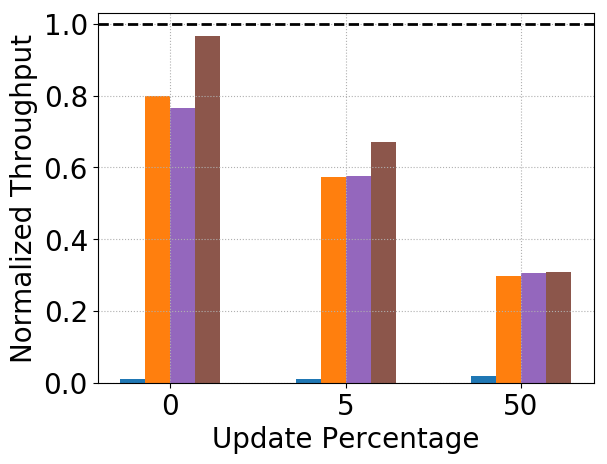} \\
		&(a) \small BST, 10M keys
		&(b) \small Hashtable, 10M keys
		&(c) \small Linked List, 4K keys
		&(d) \small Skiplist, 10M keys
	\end{tabular}
	\caption{\small Throughput results for 44 threads, automatic, normalized to the throughput of the non-persistent version of each data structure.}
	\label{fig:varyupdates}
\end{figure*}
\fi

\subsection{Number of \flush{s}}

Finally, we also measure the number of \flush{} instructions executed in each of our runs. Figure~\ref{fig:flushes} shows some representative results.
In general, the number  of \flush{s} executed is approximately the same across the different \FM{} implementations. This shows that redundant \flush{s}, resulting from a memory location still being tagged when a \flushedl{} accesses it, almost never occur. The exception is seen in the linked-list automatic durability setting, where \fmadjacent{} and link-and-persist perform significantly more \flush{s} than \fmhash. This again confirms the observation that the clwb instruction does invalidate cache lines; since accessing the \fmcounter{} for decrementing it incurs a cache miss in \fmadjacent, the location remains tagged for a longer period of time. In the automatic setting, in which all loads are \flushedl{s}, this means that it is not uncommon for another thread to read the \fmcounter{} while the location is tagged, causing an extra \flush.
 
\ifx\arxiv\undefined
\begin{figure}
	\begin{tabular}{ccc}
		\multicolumn{3}{c}{\includegraphics[width=0.4\textwidth]{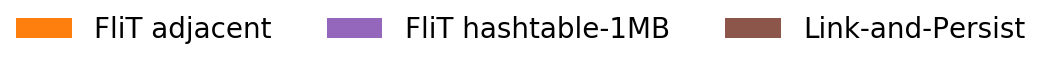}} \\
		&\includegraphics[width=0.22\textwidth]{graphs/flushes:_Hashtable_10000_keys_5_updates_44_threads.png} 
		&\includegraphics[width=0.22\textwidth]{graphs/flushes:_List_128_keys_5_updates_44_threads.png} \\ 
		&(a) \small Hashtable, 10K keys
		&(b) \small List, 128 keys
	\end{tabular}
	\caption{\small Number of flushes per operation, $5\%$ updates}
	\label{fig:flushes}
\end{figure}
\else
\begin{figure} \centering
	\begin{tabular}{ccc}
		\multicolumn{3}{c}{\includegraphics[width=0.6\textwidth]{graphs/flushes_compare_versions_legend.png}} \\
		&\includegraphics[width=0.22\textwidth]{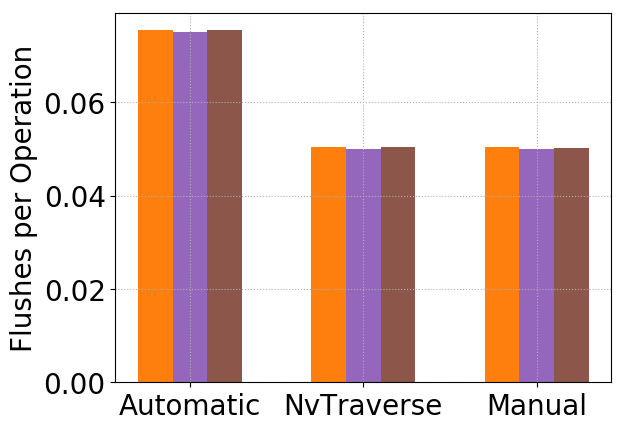} 
		&\includegraphics[width=0.22\textwidth]{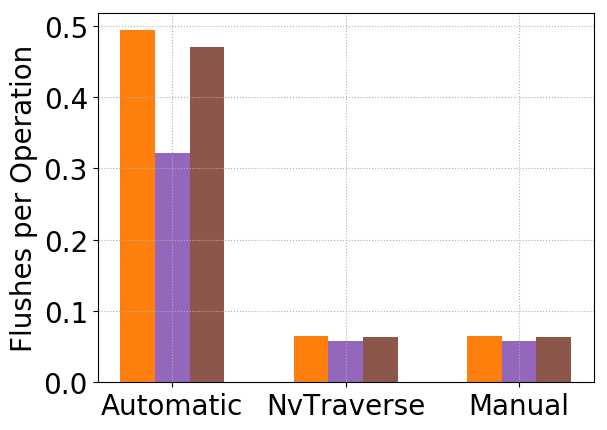} \\ 
		&(a) \small Hashtable, 10K keys
		&(b) \small List, 128 keys
	\end{tabular}
	\caption{\small Number of flushes per operation, $5\%$ updates}
	\label{fig:flushes}
\end{figure}
\fi

	\section{Related Work}\label{sec:related}

There have been many papers focusing on finding how to easily and efficiently program for NVRAM. 
Izraelevitz et al.~\cite{izraelevitz2016linearizability} present the notion of \emph{durable linearizability}, a correctness condition for persistent data structures. At a high level, durable linearizability requires a data structure to be linearizable despite any number of system crashes that occur during its execution. Izraelevitz el al. also showed how to place \flush{} and \fence{} instructions in linearizable code with acquire-release consistency to guarantee durable linearizability. In this paper we show how to rewrite this construction in terms of the \interface, and use the \FM{} library to optimize this implementation. Izraelevitz et al also introduce a weaker correctness guarantee, called buffered durable linearizability, which we do not consider in this paper. 

Researchers have also introduced other correctness criteria for persistence and explored how to support them efficiently~\cite{bendavid2019delay,friedman2018persistent,berryhill2016robust,attiya2018nesting,attiya2020tracking}.
These works consider not only the state of shared memory upon recovery from a system crash, but also whether processes can continue their previous execution. For example, detectability~\cite{friedman2018persistent}, requires that each process be able  to find out whether its most recently called operation had completed before a crash. These conditions can be achieved by storing extra metadata beyond what is stored in a non-persisted execution. We believe that using the \interface{} when designing algorithms for these other correctness criteria can improve performance and portability just as much as it  does for durably linearizable implementations.

Many algorithms have been designed for NVRAM in the context of file systems and database indexes~\cite{chen2015persistent,lee2017wort,lejsek2009nv,xu2016nova,venkataraman2011consistent,yang2015nv,lee2019recipe}. These algorithms are often lock-based, rather than the lock-free data structures that we have compared to in our evaluation. We believe that the \interface{}, and its implementation in the \FM{} library, can also be used to enhance such algorithms. Indeed, Lee et al. use a technique (like link-and-persist~\cite{david2018logfree}) in their B-tree algorithm~\cite{lee2019recipe}. However, we focused on lock-free data structures in our evaluation since the largest benefits in the \FM{} library's implementation can be seen in contended workloads, which are less prominent in lock-based algorithms. Still, the \interface{} captures lock-based algorithms as well, leaving room for optimized solutions by treating private instructions (those inside a lock) separately from shared instructions. Similarly, we believe the \interface{} can be used to write and reason about efficient persistent transactional memories, a topic that has also drawn significant attention in recent years~\cite{correia2018romulus,ramalhete19onefile,pmdk,beadle2020nonblocking}.

Several papers provide other programming interfaces for NVRAM. Mnemosyne~\cite{volos2011mnemosyne} provides an interface for using persistent memory through \emph{persistent regions}. 
Atlas~\cite{chakrabarti2014atlas} provides persistence for general lock-based programs, but does not capture lock-free algorithms. 
Gogte et al~\cite{gogte2018persistency} propose semantics for persistent synchronization-free regions.
Other works capture the persistence semantics offered by modern architectures, like Intel-X86~\cite{raad2019persistency} and ARMv8~\cite{raad2019model}. This line of work differs from ours in its goals; we propose an interface for easy persistent programming, which can be implemented in hardware using the semantics formalized in these papers.
%
	\section{Conclusion}\label{sec:conclusion}

In this paper, we introduce the \FM{} library, a C++ library for designing simple and efficient persistent programs for NVRAM. \FM{} is implemented in a way that avoids unnecessary flushing by using \emph{\fmcounter{s}} to track dirty cache lines.

We test \FM{} on an Intel machine with Optane DC memory, and demonstrate that the \FM{} library not only achieves remarkable speedups over even the most optimized persistent data structures, but is also widely applicable. 

Our implementation tested two different ways of allocating the \fmcounter{s} and mapping them to memory locations. Many other variants are possible, and it would be interesting to see the effects of different counter allocation strategies on algorithms that use \FM. One natural option that we did not explore is to assign one counter per cache line rather than at the granularity of words. 

While \FM's default mode makes any linearizable data structure durable with minimal code changes and impressive performance, it also allows further optimizations. In particular, it allows a programmer to specify some instructions that do not need to be persisted. We capture this flexibility with the \interface, which defines the semantics of code in which some memory instructions can remain volatile, while others must be persisted. This interface is language- and architecture-agnostic, and we show it captures persistence behavior in many algorithms; we believe that the \interface{} can be implemented on different architectures, and would achieve similar performance gains as those achieved by \FM.
Note that the algorithm for maintaining \fmcounter{s} is more general than the \interface{} and can be used to implement other persistent interfaces as well.

	\bibliographystyle{plain}
	\bibliography{strings,biblio}

\end{document}